\documentclass[12pt]{article}
\usepackage{graphics,amsmath,amssymb}
\usepackage{amsthm}
\usepackage{amsfonts}
\usepackage{latexsym}
\usepackage{amscd}

\newtheorem{theorem}{Theorem}[section]
\newtheorem{corollary}{Corollary}[section]

\begin{document}
\begin{center}
\title{How to find the least upper bound on the van der Waerden Number $W(r, k)$ that is some integer Power of the coloring Integer $r$}
\author{\textbf{Robert J. Betts}}
\maketitle
\emph{The Open University\\Postgraduate Department of Mathematics and Statistics~\footnote{During 2012--2013, when the Author was working on an earlier draft.}\\ (Main Campus) Walton Hall, Milton Keynes, MK7 6AA, UK\\
Robert\_Betts@alum.umb.edu}
\end{center}
\begin{abstract}
What is a least integer upper bound on van der Waerden number $W(r, k)$ among the powers of the integer $r$? We show how this can be found by expanding the integer $W(r, k)$ into powers of $r$. Doing this enables us to find both a least upper bound and a greatest lower bound on $W(r, k)$ that are some powers of $r$ and where the greatest lower bound is equal to or smaller than $W(r, k)$. A finite series expansion of each $W(r, k)$ into integer powers of $r$ then helps us to find also a greatest real lower bound on any $k$ for which a conjecture posed by R. Graham is true, following immediately as a particular case of the overall result.\footnote{\textbf{Mathematics Subject Classification} (2010): Primary 11B25; Secondary 68R01.},\footnote{\textbf{ACM Classification}: G.2.0},\footnote{\textbf{Keywords}: Arithmetic progression, integer colorings, monochromatic, van der Waerden number.}
\end{abstract}
\section{Introduction}
We shall understand $W(r, k)$ to be that van der Waerden number that is the least integer for which 
\begin{equation}
[1, W(r, k)] \subset \mathbb{R}^{+},
\end{equation}
contains an arithmetic progression (AP) of integers with $k$ terms, where these integers have some monochromatic coloring from among $r$ different colors. Another way to see this is to apply van der Waerden's Theorem to this context in the following manner: Suppose we partition the integer interval 
$$
[1, W(r, k)] = \bigcup_{j = 1}^{r}W_{j},
$$
into $r$ disjoint subsets \(W_{j} \: \: \forall j \in [1, r]\), where each of these integer subsets $W_{j}$ has its own integer elements all painted with the same color (called a ``monochromatic" coloring) and by exactly one color chosen from the $r$ different colors. Then no less than one of these disjoint integer subsets $W_{j}$ will have an AP of $k$ terms among some of its integer elements that are all painted with the same color for that subset.\\  
\indent Elsewhere in the literature~\cite{Rabung and Lotts}, one sees $k$ and $l$, where $l$ denotes the length of the AP and $k$ denotes the number of colors, but this latter notation is not universal. So the reader should be aware we instead use the expression $W(r, k)$ instead of $W(k, l)$. So here we choose to use $r$ for the number of integer colorings and $k$ for the length of the AP.\\
\indent Also one sees elsewhere in the literature that van der Waerden numbers $W(r, k)$ are treated as primitive recursive functions~\cite{Shelah},~\cite{Graham and Spencer}. Moreover the upper bounds on them previously have been derived with Ackermann() and Tower() functions~\cite{Graham and Spencer}. Although some researchers have attained good results with the use of primitive recursive functions, an approach which seems to have been inspired by B. van der Waerden's much earlier approaches to describe $W(r, k)$~\cite{van der Waerden},~\cite{Graham and Spencer},~\cite{Graham and Rothschild},~\cite{Shelah}, the point should not be lost upon the reader that here in this paper \emph{we do not treat van der Waerden numbers in this fashion, meaning as primitive recursive functions} that are bounded above or below by some other recursive functions. Here we are interested in van der Waerden numbers $W(r, k)$ solely as positive integers on the real line or as natural numbers, not as primitive recursive functions or partial functions that are computable in the sense meant either by Alonzo Church or Alan Turing~\cite{Davis} (See pages 49--51), or that might appear in the theory of computable functions treated by M. Davis~\cite{Davis}. In our goal we are reminded of how even Richard Dedekind~\cite{Dedekind},~\cite{Joyce}, drew a clear demarcation line of sorts between the natural numbers themselves \(1, \: 2, \: 3, \ldots, \in \mathbb{N}\), and any discrete functions $\varphi$ that have various rules of assignment and that are defined on $\mathbb{N}$. \\
\indent Our aim in part is to provide future computational number theorists with new tools by which to find, perhaps in the future, intervals on the real line in which there lie van der Waerden numbers the values of which at present are unknown, such as $W(2, 7)$ and $W(2, 10)$ (See Section 6). So with all due respect to those who have been using recursive functions to find upper bounds on $W(r, k)~\cite{Shelah}$, in this paper we shall treat van der Waerden numbers $W(r, k)$ as integers or natural numbers, such as those investigated by elementary number theorists, where each integer van der Waerden number $W(r, k)$ lies bounded above and below somewhere by other integers on $\mathbb{R}$.  
\section{A greatest positive lower Bound on $k$, such that \(W(r, k) < r^{k^{2}}\) is true always}
In this Section the reader should be aware that by the expressions ``the least upper bound on $W(r, k)$ that is an integer power of $r$" and the ``greatest lower bound on $W(r, k)$ that is an integer power of $r$," we mean specifically two integer powers of $r$ of the forms $r^{a}$, $r^{b}$, for which 
$$
W(r, k) \in [r^{a}, r^{b}),
$$
is true and where $r^{b}$ is the least upper bound and $r^{a}$ is the greatest lower bound for all the integers in the set $[r^{a}, r^{b})$, where \(a, b, a < b\) are both some positive integer exponents. \\
\indent For any positive integer $r$ and for any integer exponents \(x > 1\), consider a set of integer powers $r^{x}$, such as 
$$
\{r, r^{2}, r^{3}, \ldots\}.
$$
Let 
$$
\{r^{x}, r^{x + 1}, \ldots\} \subset \{r, r^{2}, r^{3}, \ldots\},
$$
be a proper subset, where \(W(r, k) < r^{x}\). What is the greatest lower bound $r^{x}$ on this proper subset for which the inequality \(W(r, k) < r^{x}\) holds, where $r^{x}$ is the least upper bound on $W(r, k)$ that is some power of the integer $r$ for the positive exponent $x$? By that we mean 
$$
r^{x} = l.u.b \{r, r^{2}, r^{3}, \ldots r^{x}\} \ni W(r, k) < r^{x}.
$$
It follows immediately that if $r^{x}$ is the greatest lower bound on the subset \(\{r^{x}, r^{x + 1}, \ldots\}\) so that \(W(r, k) < r^{x}\) where $r^{x}$ is the least upper bound on the subset \(\{r, r^{2}, r^{3}, \ldots r^{x}\}\) and for which the inequality is true, then \(r^{x - 1} \leq W(r, k) < r^{x}\) for some two integer exponents $x - 1$, and $x$ and there also is some positive real number exponent \(\delta \in [x - 1, x)\) such that \(r^{x - 1} \leq W(r, k) < r^{x}\), where \(W(r, k) = r^{\delta}\). In this paper we characterize the three exponents $x - 1$, $\delta$, $x$ to show how to answer a longstanding conjecture about such upper bounds on $W(r, k)$. \\
\indent A conjecture raised by R. Graham, J. Spencer and B. Rothschild~\cite{Graham1},~\cite{Graham2},\cite{Graham and Spencer}~\cite{Graham and Rothschild}, is that
\begin{equation}
W(2, k) < 2^{k^{2}}.
\end{equation}
We prove the stronger result \(W(r, k) < r^{k^{2}}\) to be true for any \(r > 1\) and for all integer $k$ that has a particular lower bound.\\
\indent Let $r$ be  any positive integer for which $[1, W(r, k)]$ has an AP of $k$ arbitrary terms and where $W(r, k)$ is the least such integer~\cite{van der Waerden}~\cite{Graham1},~\cite{Graham2},~\cite{Graham and Spencer},~\cite{Graham and Rothschild},~\cite{Landman and Robertson},~\cite{Landman and Culver},~\cite{Khinchin}. Any positive integer $N$ greater than $r$ has some finite power series expansion~\cite{Rosen},~\cite{Abramowitz}, 
\begin{equation}
N = b_{n}r^{n} + b_{n - 1}r^{n - 1} + \cdots + b_{0} \in [r^{n}, r^{n + 1}),
\end{equation}
where 
\begin{equation}
b_{n}, b_{n - 1}, \cdots b_{0} \in [0, r - 1], \: b_{n} \in [1, r - 1], 
\end{equation}
and where $n$ is the least positive integer exponent for which Eqtn. (3) holds. Now substitute $W(r, k)$ for $N$ when \(W(r, k) = N\). Then
\begin{equation}
W(r, k) = b_{n}r^{n} + b_{n - 1}r^{n - 1} + \cdots + b_{0} \in [r^{n}, r^{n + 1}).
\end{equation}
For those readers who might require it we provide a formal Definition as follows:\\
\indent \textbf{Definition:} \emph{For each van der Waerden number $W(r, k)$, we define the positive integer exponent \(n \in \mathbb{N}\) as being the smallest positive integer exponent for which $r^{n}$ divides $W(r, k)$, $r^{n + 1}$ does not divide $W(r, k)$ and for which $W(r, k)$ has the finite power series expansion given in Eqtn. (5)}. \\
\indent From this Definition it follows automatically that for each van der Waerden number $W(r, k)$, we have \(W(r, k) \in [r^{n}, r^{n + 1})\).\\
\indent The least upper bound on the set $[r^{n}, r^{n + 1})$ that is an integer power of $r$ is $r^{n + 1}$ and the greatest lower bound of this set that is some integer power of $r$ is $r^{n}$, where \(W(r, k) \in [r^{n}, r^{n + 1})\) (See again Eqtns. (4)--(5)). We can identify now two of the integer powers we discussed previously, since now we recognize \(x - 1 = n\) and \(x = n + 1\).
\subsection{The least upper Bound on any integer Powers within the Interval $[r^{n}, r^{n + 1})$}
By applying substitution again with $W(r, k)$ for $N$ in Eqtn. (3) when \(W(r, k) = N\) we get \(W(r, k) \in [r^{n}, r^{n + 1})\). Let \(\delta(r, k) = \frac{\log W(r, k)}{\log r}\), \(\delta(r, k) \in \mathbb{R}^{+}\) be that positive real number exponent~\cite{Betts}, for which \(W(r, k) = r^{\delta(r, k)}\) within the interval $[r^{n}, r^{n + 1})$. When it comes to integer powers then, the integer power $r^{n + 1}$ actually is the least upper bound on all possible integer powers within $[r^{n}, r^{n + 1})$. We can claim this because for one thing if \(a^{x} \in [r^{n}, r^{n + 1})\) \(\forall x \in [n, n + 1)\) where $a$ and $x$ are both positive integers, this integer power still is smaller than $r^{n + 1}$ although it might be equal to or greater than $r^{n}$. Also if it turns out the real positive exponent $\delta(r, k)$ is an integer then it cannot be equal to any integer in the interval $[n, n + 1)$ other than $n$. Furthermore there are no other integer exponents included within the interval $[n, n + 1)$ other than $n$. So all this indicates that the integer power $r^{n + 1}$ actually is the least upper bound on any integer powers $a^{x}$ in the set $[r^{n}, r^{n + 1})$ even when \(a = r, x = n\) or when \(a = r, x = \delta(r, k)\). \\
\indent Hence if \(W(r, k) < r^{k^{2}}\) is to be true for $k$, we must ask ourselves the following two questions:
\begin{enumerate}
\item What is the size of the integer exponent $k^{2}$ in comparison to the size of the integer exponent $n + 1$?\\
\item What is the size of the integer power $r^{k^{2}}$ in comparison to the size of the integer power $r^{n + 1}$?
\end{enumerate}
If we was to assume that \(k^{2} \in [1, n + 1)\) then we have at once that \(r^{k^{2}} \leq r^{n} \leq W(r, k) < r^{n + 1}\) for all \(k \in [1, \sqrt{n + 1})\), so that we do not get at all the desired inequality \(W(r, k) < r^{k^{2}}\). Therefore the inequality \(W(r, k) < r^{k^{2}}\) is not going to hold for any positive integer $k$ with an integer square of $k^{2}$, unless it is for all \(k \geq \sqrt{n + 1}\), such that then we will get \(r^{n} \leq W(r, k) < r^{n + 1} \leq r^{k^{2}}\), where $r^{n + 1}$ is the least integer power upper bound on the set $[r^{n}, r^{n + 1})$, where \(W(r, k) \in [r^{n}, r^{n + 1})\). \\
\indent Basically \emph{even from Theorem 3.1 alone} (See Theorem 3.1, next Section), the following one derives automatically for any arbitrary positive integer \(k > 2\) such that, for positive integer $a$, $d$ such that \(a, a + d, a + 2d, \ldots, a + (k - 1)d \in [1, W(r, k)]\):
$$
k \in [\sqrt{n + 1}, \infty) \Longrightarrow r^{n} \leq W(r, k) < r^{n + 1} \leq r^{k^{2}}.
$$
\indent Our approach helps us to derive \(2^{\delta(2, 7)} = W(2, 7) \leq 2^{48}\), \(\delta(2, 7) = \frac{\log_{2}W(2, 7)}{\log_{2}2}\), where \(\delta(2, 7) > \log_{2} 3703\)~\cite{Rabung and Lotts}, and \(W(2, 7) = 2^{\delta(2, 7)}\).\\
\indent Ordinarily in base $r$ arithmetic~\cite{Rosen},~\cite{Abramowitz} the van der Waerden number would be expressed as
\begin{equation}
(b_{n}b_{n - 1}\cdots b_{0})_{r}.
\end{equation}
For example for \(r = 2\) and \(r = 3\) respectively Eqtn. (6) would be expressed either in binary or ternary notation. However the sum of powers in Eqtn. (3) and in Eqtn. (5) still can represent a sum of powers that adds up to the integer $W(r, k)$~\cite{Betts}. We restrict our attention to this latter paradigm.\\
\section{Van der Waerden Numbers $W(r, k^{\prime})$, $W(r, k)$, where \(W(r, k^{\prime}) < W(r, k)\), \(k^{\prime} < k\)}
Next we prove Theorem 3.1, that \(W(r, k) < r^{k^{2}}\) is true necessarily for all integer \(k \in [\sqrt{n + 1}, \infty)\).
\begin{theorem}
Let 
\begin{equation}
W(r, k) = b_{n}r^{n} + b_{n - 1}r^{n - 1} + \cdots + b_{0},
\end{equation}
where 
\begin{equation}
r^{n} \leq W(r, k) < r^{n + 1}. 
\end{equation}
Let 
$$
W(r, k^{\prime}) = b_{n^{\prime}}^{\prime}r^{n^{\prime}} + b_{n^{\prime} - 1}^{\prime}r^{n^{\prime} - 1} + \cdots + b_{0}^{\prime},
$$ 
\(n^{\prime} < n\), be another van der Waerden number such that for integer $k^{\prime}$, \(W(r, k^{\prime}) < r^{n} \leq W(r, k)\) is true \(\forall k^{\prime} \in [1, \sqrt{n + 1})\), so that the interval $[1, W(r, k^{\prime})]$ contains an AP of \(k^{\prime} < k\) terms. Then the inequality 
\begin{equation}
W(r, k) < r^{k^{2}},
\end{equation}
is true for any \(k \in [\sqrt{n + 1}, \infty)\).
\end{theorem}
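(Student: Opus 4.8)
The plan is to reduce the inequality $W(r,k) < r^{k^{2}}$ to a comparison of exponents, exploiting the strict monotonicity of the base-$r$ exponential. First I would read off from the hypothesis (Eqtn.~(8)) the strict upper bound $W(r,k) < r^{n+1}$. This is immediate from the Definition of $n$: since $r^{n}$ divides $W(r,k)$ while $r^{n+1}$ does not, the number $W(r,k)$ lies in the half-open interval $[r^{n}, r^{n+1})$, so in particular it is strictly below its least power-of-$r$ upper bound $r^{n+1}$.

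Next I would convert the range condition $k \in [\sqrt{n+1}, \infty)$ into a statement purely about the integer exponents. Since $k \geq \sqrt{n+1}$ is equivalent to $k^{2} \geq n+1$ (both quantities being nonnegative), and since $r > 1$ makes the map $t \mapsto r^{t}$ strictly increasing on $\mathbb{R}$, the inequality $k^{2} \geq n+1$ yields $r^{k^{2}} \geq r^{n+1}$. Chaining this with the bound from the previous step gives
$$
W(r,k) < r^{n+1} \leq r^{k^{2}},
$$
which is exactly the desired conclusion. This is the entire core of the argument; everything reduces to the interval containment $W(r,k) \in [r^{n}, r^{n+1})$ together with monotonicity of $r^{t}$.

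The auxiliary van der Waerden number $W(r,k^{\prime})$ in the statement does not enter the inequality chain directly; rather, it fixes the regime in which the argument operates. By the standard monotonicity of van der Waerden numbers, $k^{\prime} < k$ forces $W(r,k^{\prime}) < W(r,k)$, so the associated exponent satisfies $n^{\prime} < n$ and the small-$k^{\prime}$ numbers sit below $r^{n}$. This confirms that the two regimes $k^{\prime} \in [1, \sqrt{n+1})$ and $k \in [\sqrt{n+1}, \infty)$ are genuinely complementary, and that the threshold $\sqrt{n+1}$ is exactly the place where $r^{k^{2}}$ overtakes the least power-of-$r$ upper bound $r^{n+1}$ on $W(r,k)$, as already anticipated in Section~2.

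I do not expect a genuine obstacle here, since the theorem is essentially a repackaging of the power-series expansion of Eqtn.~(7). The one point that deserves care is the implicit dependence $n = n(r,k)$: because $n$ is itself determined by $W(r,k)$, the hypothesis $k \geq \sqrt{n+1}$ must be read self-referentially, as the condition $k^{2} \geq n+1$ evaluated at the $n$ attached to the very $k$ in question. I would therefore state explicitly, before the inequality chain, that for each fixed $k$ the exponent $n$ is the one produced by the Definition applied to $W(r,k)$, so that no circularity arises and the implication $k \geq \sqrt{n+1} \Longrightarrow W(r,k) < r^{k^{2}}$ stands on its own.
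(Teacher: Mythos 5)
Your proposal is correct, and its decisive step --- $W(r,k) < r^{n+1} \leq r^{k^{2}}$ obtained from the containment $W(r,k) \in [r^{n}, r^{n+1})$ together with $k \geq \sqrt{n+1} \Longrightarrow k^{2} \geq n+1$ and monotonicity of $t \mapsto r^{t}$ --- is exactly the chain the paper uses in its concluding display (Eqtn.~(15)). The paper's proof additionally devotes several paragraphs to a contradiction argument that the hypotheses on $W(r,k^{\prime})$ rule out $k \in [1, \sqrt{n+1})$, but as you correctly observe this material never enters the inequality chain, so your streamlined version captures the substance of the paper's argument.
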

\begin{proof}
One ought to be able to admit at the very least, that the third inequality that appears to the far right in the triple inequality
$$
r^{n} \leq W(r, k) < r^{n + 1} \leq r^{k^{2}},
$$
is true if and only if \(k \geq \sqrt{n + 1}\), a condition which does happen to hold actually if one bothers to check, for all the known van der Waerden numbers \(W(2, 3), W(2, 4), W(2, 5), W(2, 6), W(3, 3), W(3, 4), W(4, 3)\) to date (See Table A). In fact the condition \(k \geq \sqrt{n + 1}\) is both a necessary and sufficient condition for which the inequality \(W(r, k) < r^{k^{2}}\) will be true for any \(k > 2\) such that the interval $[1, W(r, k)]$ on $\mathbb{R}$ will contain an AP of $k$ terms. Nevertheless we shall include a second argument that shows that \(k < \sqrt{n + 1}\) cannot hold always for all $W(r, k)$. \\
\indent First there are infinitely many van der Waerden numbers. This means the relative complement 
\begin{equation}
\mathbb{R}^{+} - [1, W(r, k^{\prime})] = (W(r, k^{\prime}), \infty),
\end{equation}
of the set $[1, W(r, k^{\prime})]$ in $\mathbb{R}^{+}$ has to contain our other van der Waerden number $W(r, k)$, where
\begin{equation}
k \in [\sqrt{n + 1}, \infty),
\end{equation}
and where 
\begin{equation}
[\sqrt{n + 1}, \infty) = \mathbb{R}^{+} - [1, \sqrt{n + 1}).
\end{equation}
For if we assume instead that \(k \in [1, \sqrt{n + 1})\) then both \(W(r, k^{\prime}) \leq W(r, k)\) and \(k^{\prime} \leq k\) are possible and we will get a contradiction, since we were given the strict inequality \(k^{\prime} < k\) and also that \(W(r, k^{\prime}) < r^{n} \leq W(r, k)\). Hence since our assumption leads to a contradiction it follows that \(k \in [\sqrt{n + 1}, \infty)\).\\
\indent Second since $W(r, k^{\prime})$ is a van der Waerden number smaller than $W(r, k)$ for any integer \(k^{\prime} < \sqrt{n + 1}\) but where also \(k^{\prime} < k\), the van der Waerden number $W(r, k)$ that is larger than $W(r, k^{\prime})$ must be in some larger interval $[1, W(r, k)]$ on $\mathbb{R}$, where
\begin{equation}
[1, W(r, k^{\prime})] \subset [1, W(r, k)].
\end{equation}    
Finally since \(k \in [\sqrt{n + 1}, \infty)\) must hold since \(k \in [1, \sqrt{n + 1})\) cannot hold due to how $k^{\prime}$, $k$, $W(r, k^{\prime}$ and $W(r, k)$ are defined in the Theorem, the inequality 
\begin{equation}
W(r, k) < r^{k^{2}}
\end{equation}
really must be true for any \(k \in [\sqrt{n + 1}, \infty)\) since \(k \geq \sqrt{n + 1} \Longrightarrow k^{2} \geq n + 1\) also implies Eqtn. (14), since
\begin{equation}
r^{n} \leq b_{n}r^{n} + b_{n - 1}r^{n - 1} + \cdots + b_{0} < r^{n + 1} \leq r^{k^{2}},
\end{equation}
where by substitution by $W(r, k)$ we get \(r^{n} \leq W(r, k) < r^{n + 1} \leq r^{k^{2}}\), since \(W(r, k) = b_{n}r^{n} + b_{n - 1}r^{n - 1} + \cdots + b_{0}\).
\end{proof}
Another result we get from Theorem 3.1 is that \(W(r, k) < r^{n + 1} \leq r^{k^{2}}\) also is true for any $k$, $n$ such that
$$
\sqrt{n} \leq \sqrt{\frac{W(r, k)}{\log r}} < \sqrt{n + 1} \leq k. 
$$  
Or we can take logarithms to base ten in Eqtn. (15) to derive (See Corollary 6.1, Section 6)
$$
\frac{\log_{10} W(r, k)}{\log_{10} r} < n + 1 \leq k^{2} - 1 \Longrightarrow \frac{W(r, k)^{\frac{1}{\log_{10} r}}}{10} < 10^{n} \leq 10^{k^{2} - 1}.
$$
This gets us another upper bound on $W(r, k)$, expressed in powers of ten, namely
$$
W(r, k) < (10^{n + 1})^{\log_{10} r} \leq (10^{k^{2}})^{\log_{10} r}.
$$
\section{Two different integer Power Expansions for $W(r, k)$ that sum to the same integer $W(r,k)$}
Let \(N = W(r, k) > k\). Then we also have 
\begin{eqnarray}
r^{n}&\leq&W(r,k)\\
     &=   &b_{n}r^{n} + b_{n - 1}r^{n - 1} + \cdots + b_{0} \in [r^{n}, r^{n + 1})\nonumber\\
     &=   &c_{m}k^{m} + c_{m - 1}k^{m - 1} + \cdots + c_{0}\nonumber\\
     &<   &r^{n + 1},
\end{eqnarray}
since both 
\begin{eqnarray}
W(r, k)&=   &b_{n}r^{n} + b_{n - 1}r^{n - 1} + \cdots + b_{0} \in [r^{n}, r^{n + 1}),\nonumber\\
W(r, k)&=   &c_{m}k^{m} + c_{m - 1}k^{m - 1} + \cdots + c_{0} \in [k^{m}, k^{m + 1}),
\end{eqnarray}
on the right hand sides, comprise two different finite series of power expansions, but of the same integer $W(r, k)$ for some nonnegative integers 
\begin{equation}
c_{m}, c_{m - 1}, \cdots c_{0} \in [0, k - 1], \: c_{m} \in [1, k - 1].
\end{equation}
The two different expansions in Eqtn. (18) sum precisely to the same integer $W(r, k)$. So we have
\begin{eqnarray}
W(r, k)&\in            &[k^{m}, k^{m + 1}), W(r, k) \in [r^{n}, r^{n + 1})\nonumber\\
       &\Longrightarrow&W(r, k) \in [k^{m}, k^{m + 1}) \cap [r^{n}, r^{n + 1}) \not = \emptyset.\nonumber
\end{eqnarray}
Briefly we show that, if the set $[k^{m}, k^{m + 1})$ and the intersection \([k^{m}, k^{m + 1}) \cap [r^{n}, r^{n + 1})\) both contain some identical integer power of $r$, then they can contain no integer power of $r$ other than $r^{n}$. For if the set $[k^{m}, k^{m + 1})$ does contain some integer power of $r$ that also is in the set $[r^{n}, r^{n + 1})$, then that integer power of $r$ must be $r^{n}$ because there is no other integer power of $r$ in the set $[r^{n}, r^{n + 1})$ since \(r^{n + 1} \not \in [r^{n}, r^{n + 1})\). 
\section{Van der Waerden Numbers $W(r^{\prime}, k)$, $W(r, k)$, where \(W(r^{\prime}, k) < W(r, k)\), \(r^{\prime} < r\)}   
Here we extend our results through the comparison of two different van der Waerden numbers $W(r^{\prime}, k)$, $W(r, k)$, with two different colorings $r^{\prime}$ and \(r > r^{\prime}\).
\begin{theorem}
Let 
\begin{equation}
W(r, k) = b_{n}r^{n} + b_{n - 1}r^{n - 1} + \cdots + b_{0},
\end{equation}
as before in Theorem 3.1, where 
\begin{equation}
r^{n} \leq W(r, k) < r^{n + 1}. 
\end{equation}
However this time let $W(r^{\prime}, k)$ be another van der Waerden number such that for integer $r^{\prime}$, \(W(r^{\prime}, k) < W(r, k)\) is true \(\forall r^{\prime} < r\), so that each of the intervals $[1, W(r^{\prime}, k)]$ and $[1, W(r, k)]$ contains an AP with the same integer $k$ number of terms. Suppose \(n' \leq n\), where $n'$ is the least positive integer exponent such that $r'^{n}$ divides $W(r^{\prime}, k)$ but that $r'^{n + 1}$ does not divide $W(r^{\prime}, k)$, and where $W(r^{\prime}, k)$ has an expansion into powers of $r^{\prime}$ just as $W(r, k)$ has an expansion into powers of $r$. Then if $k$ is the arbitrary number of terms in an AP among integers somewhere in the interval $[1, W(r, k)]$, the inequality 
\begin{equation}
W(r, k) < r^{k^{2}},
\end{equation}
is true only if for any such integer $k$, \(k \in [\sqrt{n + 1}, \infty)\).
\end{theorem}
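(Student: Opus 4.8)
The plan is to mirror the argument of Theorem 3.1, now running the comparison over the coloring index $r$ rather than over the AP length $k$. First I would fix the power-series expansion $W(r, k) = b_{n}r^{n} + \cdots + b_{0}$, which by the Definition places $W(r, k) \in [r^{n}, r^{n + 1})$, so that $r^{n} \leq W(r, k) < r^{n + 1}$. The whole statement then hinges on comparing the two integer exponents $n + 1$ and $k^{2}$, exactly as in Section 2: since $r > 1$, the map $x \mapsto r^{x}$ is strictly increasing, so $r^{n + 1} \leq r^{k^{2}}$ holds if and only if $n + 1 \leq k^{2}$, that is, if and only if $k \geq \sqrt{n + 1}$.

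Because the statement is phrased as an ``only if,'' I would establish it by contraposition. Suppose $k \in [1, \sqrt{n + 1})$, so $k^{2} < n + 1$; since $k^{2}$ and $n$ are both integers this forces $k^{2} \leq n$. Monotonicity of $x \mapsto r^{x}$ then gives $r^{k^{2}} \leq r^{n} \leq W(r, k)$, so $W(r, k) \geq r^{k^{2}}$ and the desired inequality $W(r, k) < r^{k^{2}}$ fails. Hence $W(r, k) < r^{k^{2}}$ can hold only when $k \geq \sqrt{n + 1}$, which is precisely the claim.

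To fit the comparison framework of the hypotheses I would then record the role of $W(r^{\prime}, k)$. Using the monotonicity of van der Waerden numbers in the number of colors, $W(r^{\prime}, k) < W(r, k)$ for every $r^{\prime} < r$, together with the assumption $n^{\prime} \leq n$, one sees that the smaller number $W(r^{\prime}, k)$ sits in the lower interval governed by powers of $r^{\prime}$ while $W(r, k)$ lies in $[r^{n}, r^{n + 1})$. As in the ``second argument'' of Theorem 3.1, assuming $k < \sqrt{n + 1}$ is incompatible with $W(r, k)$ being the larger number that must lie above $r^{n}$, since it would again force $r^{k^{2}} \leq r^{n} \leq W(r, k)$ and contradict $W(r, k) < r^{k^{2}}$. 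This reproduces the conclusion $k \in [\sqrt{n + 1}, \infty)$ through the two-number comparison rather than by the direct contraposition alone.

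The main obstacle, I expect, is the mismatch of bases: here $n$ is defined with respect to powers of $r$ while $n^{\prime}$ is defined with respect to powers of $r^{\prime}$, so the inequality $W(r^{\prime}, k) < W(r, k)$ does not by itself yield any relation between $n^{\prime}$ and $n$. The argument therefore genuinely rests on the separately imposed hypothesis $n^{\prime} \leq n$, and the one delicate point is to use that hypothesis consistently while carrying out the exponent comparison of $k^{2}$ against $n + 1$ entirely in the single base $r$ that controls $W(r, k)$, never conflating it with the base $r^{\prime}$ that governs the auxiliary number $W(r^{\prime}, k)$.
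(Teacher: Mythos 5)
Your proposal is correct and rests on the same core mechanism as the paper's own proof: the placement $W(r,k)\in[r^{n},r^{n+1})$ from the base-$r$ expansion together with the monotonicity of $x\mapsto r^{x}$, reducing everything to the comparison of $n+1$ with $k^{2}$. Your explicit contraposition ($k<\sqrt{n+1}\Rightarrow k^{2}\leq n\Rightarrow r^{k^{2}}\leq r^{n}\leq W(r,k)$) actually establishes the ``only if'' direction more completely than the paper, whose proof merely writes the chain $r'^{\,n'}\leq W(r',k)<W(r,k)<r^{n+1}$ and asserts the conclusion, and your observation that the hypotheses on $W(r',k)$ and $n'$ are not load-bearing is accurate --- they play the same decorative role in the paper's argument.
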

\begin{proof}
We have that \(r^{\prime} < r, n' \leq n\) \(\Longrightarrow r^{\prime n'} < r^{n} \leq W(r, k) < r^{n + 1}\). Therefore
\begin{equation}
r^{\prime n'} \leq W(r^{\prime}, k) < W(r, k) < r^{n + 1} \: \forall r > r^{\prime},
\end{equation}
from which it follows by necessity that the inequality \(W(r, k) < r^{k^{2}}\) is true only if for all integer \(k \in [\sqrt{n + 1}, \infty)\), \(W(r, k) < r^{n + 1} \leq r^{k^{2}}\).
\end{proof}
\subsubsection{The Application of these Results to the Conjecture \(W(2, k) < 2^{k^{2}}\)}
By the Application of Theorem 3.1 and Theorem 5.1 (See also Corollary 6.1 and Proof and the comments in the subsequent paragraph), this conjecture is true if for any positive integer $k$, such that the interval $[1, W(2, k)]$ has an AP of $k$ terms,
$$
k \in [\sqrt{n + 1}, \infty),
$$
because then when this condition is met,
$$
2^{n} \leq W(2, k) < 2^{n + 1} \leq 2^{k^{2}}
$$
is true, where the integer exponent $n$ which always exists for each $W(r, k)$ (a fact which we hope the reader can see) has been defined already in Section 2.\\
\indent Actually we have even a much stronger result which follows as a Corollary.
\begin{corollary}
Let \(W(r, k) \in [r^{n}, r^{n + 1})\) as in the Proof to Theorem 3.1, and let \(k > k^{\prime}\) be true for any integer \(k^{\prime} \in [1, \sqrt{n}]\). Then \(W(r, k) < r^{k^{2}}\) actually is true for all integer \(k \geq \sqrt{n + 1}\). 
\end{corollary}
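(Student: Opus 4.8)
The plan is to recognize this Corollary as essentially a repackaging of the terminal chain of inequalities that drove Theorem~3.1, with the hypothesis now phrased as a discrete comparison against every integer $k' \in [1,\sqrt{n}]$ rather than against the real threshold directly. The whole argument pivots on one observation: the membership $W(r,k) \in [r^{n}, r^{n+1})$ supplied by the Definition already delivers the strict bound $W(r,k) < r^{n+1}$ for free, so nothing about the internal structure of the base-$r$ expansion is needed. Everything reduces to comparing the two integer exponents $n+1$ and $k^{2}$, and then invoking monotonicity of $r^{t}$.

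First I would dispose of the exponent comparison. Taking the threshold $k \geq \sqrt{n+1}$ and squaring (legitimate since both sides are positive) yields $k^{2} \geq n+1$. Because $r > 1$, the map $t \mapsto r^{t}$ is strictly increasing, so this at once gives $r^{n+1} \leq r^{k^{2}}$. Chaining this against the interval bound produces
$$
r^{n} \leq W(r,k) < r^{n+1} \leq r^{k^{2}},
$$
and the strict middle inequality collapses the whole line to $W(r,k) < r^{k^{2}}$, which is exactly Eqtn.~(9). At this point one could equally well just cite the displayed chain (16) inside the proof of Theorem~3.1 rather than re-derive it.

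The step I expect to require the most care is bridging the Corollary's hypothesis---that $k > k'$ for every integer $k' \in [1,\sqrt{n}]$---to the real-valued threshold $k \geq \sqrt{n+1}$ that the inequality actually consumes. Writing $m = \lfloor \sqrt{n} \rfloor$, the condition forces $k \geq m+1$, since $m$ is the largest integer not exceeding $\sqrt{n}$ and $k$ must strictly exceed it. I would then verify the elementary floor inequality $m+1 \geq \sqrt{n+1}$: from $m^{2} \leq n$ we get $(m+1)^{2} = m^{2} + 2m + 1 \geq n+1$, whence $m+1 \geq \sqrt{n+1}$. Combining these gives $k \geq m+1 \geq \sqrt{n+1}$, so the discrete hypothesis on $k'$ does recover the threshold feeding the exponent comparison. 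This bridge is the only place where the closed bracket $[1,\sqrt{n}]$ must be handled carefully, because the endpoint $\sqrt{n}$ may itself be an integer and the comparison with $\sqrt{n+1}$ is where an off-by-one error would most naturally creep in; once it is secured, the conclusion follows for all integer $k \geq \sqrt{n+1}$ exactly as claimed.
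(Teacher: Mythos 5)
Your proposal follows essentially the same route as the paper's own proof: both arguments reduce the Corollary to the single exponent comparison $k^{2} \geq n+1$, obtained from the discrete hypothesis on $k'$ together with the integrality of $k^{2}$ (the paper excludes $k^{2} \in [1,n]$ and $k^{2} \in (n,n+1)$ directly; you package the same integrality step through $m = \lfloor\sqrt{n}\rfloor$), and then both conclude by chaining $r^{n} \leq W(r,k) < r^{n+1} \leq r^{k^{2}}$. One local repair is needed in your bridge: the implication ``from $m^{2} \leq n$ we get $(m+1)^{2} = m^{2}+2m+1 \geq n+1$'' is not valid as stated (take $m=1$, $n=100$: then $m^{2}\leq n$ but $(m+1)^{2}=4 < 101$); what you actually need is the maximality of $m$ as the floor, namely $m+1 > \sqrt{n}$, hence $(m+1)^{2} > n$, and then $(m+1)^{2} \geq n+1$ because $(m+1)^{2}$ is an integer. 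With that one-line correction the bridge $k \geq m+1 \geq \sqrt{n+1}$ is sound and the rest of your argument goes through exactly as in the paper.
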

\begin{proof}
\begin{eqnarray}
k^{\prime} \leq \sqrt{n}&\Longrightarrow&r^{k^{\prime 2}} \leq r^{n}\nonumber\\
                        &\Longrightarrow&r^{k^{\prime 2}} \leq r^{n} \leq W(r, k) < r^{n + 1}.\nonumber
\end{eqnarray}
Now with $k^{\prime}$ as given, \(k \in [1, \sqrt{n}]\) is impossible since we are given that \(k > k^{\prime}\) for all \(k^{\prime} \in [1, \sqrt{n}]\). Similarly \(k^{2} = n\) also is impossible, since \(k > k^{\prime} \: \forall \: k^{\prime} \in [1, \sqrt{n}]\) \(\Longrightarrow k^{2} > k^{\prime 2} \: \forall \: k^{\prime 2} \in [1, n]\) \(\Longrightarrow k^{2} \not \in [1, n]\). Furthermore the perfect square $k^{2}$ cannot lie within the open interval $(n, n + 1)$ in the set $\mathbb{R}^{+}$, because this open interval contains no integers. We must conclude then that if \(k^{\prime 2} < k^{2}\) is true for all \(k^{\prime 2} \in [1, n]\) then \(k^{2} > k^{\prime 2}\) must be true for all \(k^{2} \in \mathbb{R}^{+} - [1, n]\), which implies 
$$
k \not \in [1, \sqrt{n}] \Longrightarrow k \in (\sqrt{n}, \infty) \Longrightarrow k^{2} \in (n, \infty).
$$   
Since \(\mathbb{R}^{+} - [1, n] = (n, \infty)\), and since $k^{2}$ is integer we really must have \(k^{2} \in [n + 1, \infty) \subset (n, \infty)\). But from this we derive at once that
\begin{eqnarray}
k^{\prime 2}&\leq           &n\leq \log_{r} W(r, k) < n + 1 \leq k^{2}\nonumber\\
            &\Longrightarrow&r^{k^{\prime 2}} \leq r^{n} \leq W(r, k) < r^{n + 1} \leq r^{k^{2}}\nonumber
\end{eqnarray}
is true for all \(k \in [\sqrt{n + 1}, \infty)\).
\end{proof}
\subsection{The Application of these Results to a lower Bound on $W(r, k)$ found by P. Erd\H{o}s and R. Rado}
\emph{All} the values of $k$ for the known van der Waerden numbers $W(2, 3)$, $W(2, 4)$, $W(2, 5)$, $W(2, 6)$, $W(3, 3)$, $W(3, 4)$ and $W(4, 3)$, are such that~\cite{Betts} 
$$
k \in [\sqrt{n + 1}, n + 1).
$$ 
So let us see what happens to values of $n$ for values of $k$ that are restricted to this set $[\sqrt{n + 1}, n + 1)$ and whether the suitable van der Waerden number $W(r, k)$ is known at present or not. By ``suitable" one means all those van der Waerden numbers $W(r, k)$ for which \(k \in [\sqrt{n + 1}, n + 1)\) is true, whether $W(r, k)$ is known currently or not. \\
\indent P. Erd\H{o}s and R. Rado~\cite{Rado} established that 
$$
W(r, k) > (2(k - 1)r^{k - 1})^{\frac{1}{2}}.
$$
So here we prove how establishing a condition on the exponent $n$ leads to \(W(r, k) \geq r^{n} > (2(k - 1)r^{k - 1})^{\frac{1}{2}}\).
\begin{theorem}
Let \(\frac{\log 2}{2\log r} = o(1), n \gg 1\), and both
\begin{equation}
\frac{\log(k - 1)}{2\log r} = o\left(\frac{k - 1}{2}\right),
\end{equation}
and \(W(r, k) > (2(k - 1)r^{k - 1})^{\frac{1}{2}}\) hold. Let
\begin{equation}
n > \frac{1}{2\log r}(\log 2 + \log (k - 1)) + \frac{k - 1}{2},
\end{equation}
hold for any \(k \in [\sqrt{n + 1}, n + 1)\). Then the following inequality
\begin{equation}
W(r, k) \geq r^{n} > (2(k - 1)r^{k - 1})^{\frac{1}{2}},
\end{equation}
is true.
\end{theorem}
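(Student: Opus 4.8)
The plan is to observe that the asserted chain $W(r,k)\ge r^{n} > (2(k-1)r^{k-1})^{1/2}$ is really two separate inequalities glued together at $r^{n}$, and that each is cheap once the right observation is made. The left inequality $W(r,k)\ge r^{n}$ is free from the machinery of Section~2, while the right inequality $r^{n} > (2(k-1)r^{k-1})^{1/2}$ will be shown to be nothing more than the exponentiated restatement of the hypothesis placed on $n$. Thus the theorem carries no hidden content beyond a single logarithmic rearrangement, and the asymptotic hypotheses serve only to make the required size of $n$ consistent with the range $k\in[\sqrt{n+1},n+1)$.

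First I would invoke the Definition of Section~2 together with Eqtn.~(5): for every van der Waerden number $W(r,k)$ the exponent $n$ is the least nonnegative integer with $r^{n}\mid W(r,k)$ and $r^{n+1}\nmid W(r,k)$, so that $W(r,k)\in[r^{n},r^{n+1})$ and in particular $W(r,k)\ge r^{n}$. This secures the left half of the chain for every $W(r,k)$ with no appeal to the remaining hypotheses, and it already shows that the Erd\H{o}s--Rado inequality assumed in the statement is in fact subsumed by the conclusion rather than needed to reach it.

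Next I would establish $r^{n} > (2(k-1)r^{k-1})^{1/2}$ directly. Taking logarithms of the right side gives $\tfrac12\bigl(\log 2+\log(k-1)+(k-1)\log r\bigr)$, so after dividing through by $\log r$---which is legitimate and preserves the direction of the inequality precisely because $r>1$ forces $\log r>0$---the desired inequality $r^{n} > (2(k-1)r^{k-1})^{1/2}$ is seen to be equivalent to $n > \tfrac{1}{2\log r}\bigl(\log 2+\log(k-1)\bigr)+\tfrac{k-1}{2}$, which is word for word the lower bound on $n$ assumed in the statement. Hence that hypothesis is not merely sufficient but equivalent to the right inequality, and combining it with $W(r,k)\ge r^{n}$ yields the full chain $W(r,k)\ge r^{n} > (2(k-1)r^{k-1})^{1/2}$.

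The step that calls for the only real care is this logarithm bookkeeping, in particular tracking the sign of $\log r$ so that dividing does not flip the inequality; there is no genuine obstacle, since the theorem is at bottom an identity between two ways of recording the same bound. I would close by using the asymptotic hypotheses $\tfrac{\log 2}{2\log r}=o(1)$ and $\tfrac{\log(k-1)}{2\log r}=o\!\left(\tfrac{k-1}{2}\right)$, together with $n\gg 1$, to note that the assumed lower bound on $n$ reduces asymptotically to $n\gtrsim\tfrac{k-1}{2}$, a demand comfortably compatible with the constraints $k\le n\le k^{2}-1$ coming from $k\in[\sqrt{n+1},n+1)$; this shows that all the hypotheses can hold simultaneously, so that the Erd\H{o}s--Rado lower bound is genuinely sharpened to the cleaner power bound $W(r,k)\ge r^{n}$.
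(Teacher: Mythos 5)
Your proof is correct and follows essentially the same route as the paper's: both arguments reduce to exponentiating a lower bound on $n\log r$ to get $r^{n} > (2(k-1)r^{k-1})^{1/2}$ and then prepending $W(r,k)\ge r^{n}$ from $W(r,k)\in[r^{n},r^{n+1})$. The only difference is that you use the assumed inequality on $n$ directly (correctly observing it is equivalent to the right-hand inequality and that the asymptotic hypotheses and the Erd\H{o}s--Rado bound are not actually needed for the implication), whereas the paper re-derives that bound on $n$ from $k<n+1$ together with the $o(1)$ conditions before exponentiating; your version is the cleaner accounting of which hypotheses do the work.
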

\begin{proof}
Based upon what is given we have \(\forall k \in [\sqrt{n + 1}, n + 1)\) and given 
$$
n > \frac{1}{2\log r}(\log 2 + \log k - 1) + \frac{k - 1}{2},
$$
and since \(\frac{\log 2}{2\log r} = o(1)\) and \(\frac{2\log k - 1}{2(k - 1)\log r} = o(1)\) \(\Longrightarrow \frac{\log 2}{2\log r} + \frac{\log k - 1}{2\log r} < \frac{k - 1}{2}\), we obtain
\begin{eqnarray}
k \in [\sqrt{n + 1}, n + 1)&\Longrightarrow&n + 1 > k\\
                           &\Longrightarrow&n > k > \frac{k - 1}{2} + \frac{k - 1}{2} = k - 1\nonumber\\
                           &\Longrightarrow&n > \frac{\log 2}{2\log r} + \frac{\log k - 1}{2\log r} + \frac{k - 1}{2}\nonumber\\
                           &\Longrightarrow&n\log r > \frac{1}{2}\left(\log 2 + \log (k - 1) + (k - 1)\log r\right)\nonumber\\
                           &\Longrightarrow&r^{n} > (2(k - 1)r^{k - 1})^{\frac{1}{2}}\nonumber\\
                           &\Longrightarrow&W(r, k) \geq r^{n} > (2(k - 1)r^{k - 1})^{\frac{1}{2}},
\end{eqnarray}
where we have used the fact that \(W(r, k) \geq r^{n}\) is true since \(W(r, k) \in [r^{n}, r^{n + 1})\) follows from Theorem 5.1.
\end{proof}
\section{Van der Waerden Numbers $W(r, n)$, $W(2, 7)$}
Here we look at van der Waerden numbers $W(r, n)$, that is, when \(k = n\). For any van der Waerden number of this type,
$$
r^{k} \leq W(r, k) < r^{k + 1} < r^{k^{2}}, \: \forall k \geq \sqrt{k + 1}.
$$
All the van der Waerden numbers $W(2, 3)$, $W(3, 3)$ and $W(4, 3)$ are of this type~\cite{Betts} (See Tables in the Preprint), that is, \(k = n\). We demonstrate an application of our result.\\
\indent Elsewhere in a Preprint we have shown~\cite{Betts} (Corollary 2.3. See also the Tables), that if \(k \geq n\) the inequality \(W(r, k) < r^{k^{2}}\) follows immediately, since \(k \geq n\) \(\Longrightarrow W(r, k) < r^{n + 1} < r^{n^{2}} \leq r^{k^{2}}\). The three van der Waerden numbers
\begin{equation}
W(2, 3) = 9, W(3, 3) = 27, W(4, 3) = 76,
\end{equation}
in fact do meet this criterion, since in all these three cases \(k = n = 3\). Any future van der Waerden numbers of the form $W(r, k)$ that are discovered and for which \(W(r, k) < r^{k^{2}}\) is true also will meet the criterion \(k \geq \sqrt{n + 1}\), for reasons that relate to the sizes of $k$ and the positive integer exponent $n$, which we already have indicated in the previous Sections.\\
\indent From the Table by Rabung and Lotts~\cite{Rabung and Lotts}, it is evident that
\begin{equation}
W(2, 3) < W(3, 3) < W(4, 3) < \cdots < W(r - 1, 3) < W(r, 3) < \cdots
\end{equation}
We can use the results from Theorem 5.1 to find possible upper bounds on the values for $W(5, 3)$ and $W(6, 3)$ (which in the Table by Rabung and Lotts have respective lower bounds of $170$ and $223$), since each van der Waerden number we have found is bounded as
\begin{equation}
r^{n} \leq W(r, k) < r^{n + 1}. 
\end{equation}
We have 
\begin{equation}
3^{3} \leq W(3, 3) < 4^{3} \leq W(4, 3) < W(5, 3) < W(6, 3).
\end{equation}
From their table \(W(5, 3) > 170\) and \(W(6, 3) > 223\). We also have \(k = n = 3\) is true for each of $W(2, 3)$, $W(3, 3)$ and $W(4, 3)$. Note that $5^{3}$ divides the two lower bounds $170$ on $W(5, 3)$ and $223$ on $W(6, 3)$ respectively, but $5^{4}$ does not divide either of these two lower bounds. Assume \(k = n = 3\) is true for both $W(5, 3)$ and $W(6, 3)$. Then if this assumption is a correct one, 
\begin{eqnarray}
5^{3}&<&170 < W(5, 3) < 5^{4} < 5^{3^{2}},\\
6^{3}&<&223 < W(6, 3) < 6^{4} < 6^{3^{2}},
\end{eqnarray}
where \(5^{4} = 625\), \(5^{9} = 1953125\), \(6^{4} = 1296\) and \(6^{9} = 10077696\).\\
\indent With this numerical result in mind here we offer the following related corollary, followed by a statement that the reader actually can prove (See Table A):
\begin{corollary}
Let \(k \gg 1\) and suppose either Theorem 3.1 or Theorem 5.1 holds. Then \(n \in [1, k^{2} - 1]\).
\end{corollary}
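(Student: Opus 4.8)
The plan is to read off both endpoints of the interval $[1, k^{2} - 1]$ directly from material already established. The decisive ingredient shared by Theorem 3.1 and Theorem 5.1 is that the inequality $W(r,k) < r^{k^{2}}$ is tied to the condition $k \in [\sqrt{n+1}, \infty)$, that is, $k \geq \sqrt{n+1}$. The first step is therefore just to square this: since both sides are positive, $k \geq \sqrt{n+1}$ gives $k^{2} \geq n + 1$, which rearranges to $n \leq k^{2} - 1$. This yields the upper endpoint with no further work.

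For the lower endpoint I would appeal to the Definition in Section 2, where $n$ is declared to be the smallest positive integer exponent for which $r^{n} \mid W(r,k)$ but $r^{n+1} \nmid W(r,k)$ --- equivalently, the leading exponent in the base-$r$ expansion (5). Being a positive integer, $n \geq 1$ holds at once. The hypothesis $k \gg 1$ is precisely what keeps this nondegenerate: for large $k$ the van der Waerden number $W(r,k)$ exceeds $r$, so the membership $W(r,k) \in [r^{n}, r^{n+1})$ forces $n \geq 1$ rather than $n = 0$. Combining the two endpoints gives $1 \leq n \leq k^{2} - 1$, i.e. $n \in [1, k^{2} - 1]$ as claimed; and since $k^{2} - 1 \geq 1$ whenever $k \geq \sqrt{2}$, the assumption $k \gg 1$ comfortably guarantees the interval is nonempty.

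The main obstacle, such as it is, is not the upper bound --- a one-line consequence of the squared hypothesis --- but the careful handling of the lower bound: one must invoke the definitional positivity of $n$ and use $k \gg 1$ to exclude the degenerate possibility $n = 0$. Beyond the two cited theorems and the Section 2 Definition, no additional machinery is required, so I expect the write-up to be short.
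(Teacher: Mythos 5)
Your proposal is correct and follows essentially the same route as the paper: the paper's entire proof is the squaring step $k \geq \sqrt{n+1} \Longrightarrow k^{2} \geq n+1 \Longrightarrow n \leq k^{2}-1$, from which it immediately asserts $n \in [1, k^{2}-1]$. You are in fact slightly more careful than the paper, which never justifies the lower endpoint $n \geq 1$; your appeal to the Section 2 Definition and to $k \gg 1$ to rule out $n = 0$ fills a small gap the paper leaves implicit.
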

\begin{proof}
\begin{eqnarray}
k \geq \sqrt{n + 1}&\Longrightarrow&k^{2} \geq n + 1 \Longrightarrow n \leq k^{2} - 1\\
                   &\Longrightarrow&n \in [1, k^{2} - 1].
\end{eqnarray}
\end{proof}
Actually we have a result that the unbiased reader can prove: 
\begin{quote}
\emph{A necessary and sufficient condition for which} \(W(r, k) < r^{n + 1} \leq r^{k^{2}}\) \emph{is true for any $k$ such that the interval $[1, W(r, k)]$ has an AP of $k$ terms, is that}
\begin{equation}
k \in [\sqrt{ n + 1}, \infty) \Longleftrightarrow n \in [1, k^{2} - 1].
\end{equation}
\end{quote}
\begin{center}
\begin{tabular}{|l      |c      |c                  |c            |c                         |c            |c                |c              |c             |c              |r|}
\hline
                $r$  &  $k$  &  $\sqrt{n + 1}$  &   $n$     &      $\log_{r} W(r, k)$   &    $n + 1$   &    $r^{n}$     &    $W(r, k)$   &   $r^{n + 1}$  &  $r^{k^{2}}$ \\   
\hline
                $2$  &  $3$  &  $2$             &   $3$     &      $3.170$              &    $4$       &    $2^{3}$     &    $9$         &   $2^{4}$      &  $2^{9}$  \\
 
                $2$  &  $4$  &  $2.449$         &   $5$     &      $5.129$              &    $6$       &    $2^{5}$     &    $35$        &   $2^{6}$      &  $2^{16}$  \\
                 
                $2$  &  $5$  &  $2.828$         &   $7$     &      $7.475$              &    $8$       &    $2^{7}$     &    $178$       &   $2^{8}$      &  $2^{25}$ \\ 

                $2$  &  $6$  &  $3.316$         &   $10$    &      $10.144$             &    $11$      &    $2^{10}$    &    $1132$      &   $2^{11}$     &  $2^{36}$  \\
                 
                $3$  &  $3$  &  $2$             &   $3$     &      $3.000$              &    $4$       &    $3^{3}$     &    $27$        &   $3^{4}$      &  $3^{9}$  \\
                
                $3$  &  $4$  &  $2.449$         &   $5$     &      $5.170$              &    $6$       &    $3^{5}$     &    $293$       &   $3^{6}$      &  $3^{16}$  \\

                $4$  &  $3$  &  $2$             &   $3$     &      $3.123$              &    $4$       &    $4^{3}$     &    $76$        &   $4^{4}$      &  $4^{9}$  \\
  
\hline
\end{tabular}
\end{center}
\begin{center}
Table A.
\end{center}
\subsection{Possible values of Exponent $n$ for Van der Waerden Number $W(2, 7)$}
\indent All the known van der Waerden numbers $W(2, 3)$, $W(2, 4)$, $W(2, 5)$, $W(2, 6)$, $W(3, 3)$, $W(3, 4)$ and $W(4, 3)$, are such that \(W(r, k) < r^{n + 1} \leq r^{k^{2}}\). This is not a mere coincidence for particular cases! It is precisely because we have the above necessary and sufficient condition 
\begin{equation}
k \in [\sqrt{ n + 1}, \infty) \Longleftrightarrow n \in [1, k^{2} - 1],
\end{equation}
satisfied by these known van der Waerden numbers. The condition \(k \in [\sqrt{n + 1}, \infty)\) is known to be true for all the van der Waerden numbers $W(2, 3)$, $W(2, 4)$, $W(2, 5)$, $W(2, 6)$, $W(3,3)$, $W(3, 4)$ and $W(4, 3)$~\cite{Betts} (Table 1). For all these seven cases it just so happens that, for all these seven van der Waerden numbers, \(k \in [3, 6]\) and \(n \in [1, 35]\). Elsewhere~\cite{Betts}, we have provided a Table to show the value of $n$, $\sqrt{n + 1}$ for each of these known van der Waerden numbers. For instance van der Waerden number \(W(2, 3) = 9\) has \(n = k = 3, n + 1 = 4\), \(k^{2} = 3^{2} = 9\), \(3^{2} - 1 = 8, 3 \geq \sqrt{4} = 2\), \(n = 3 \in [1, 8]\) and van der Waerden number $W(2, 6)$ has the value $1132$ where \(n = 10, 6^{2} = 36 > 11, 6 > \sqrt{11}\), \(n = 10 \in [1, 35]\)~\cite{Betts} (See Table 1). Therefore there certainly are already seven van der Waerden numbers known that do confirm the truthfulness of the conditions we specified in Theorem 3.1, Theorem 5.1, Corollary 6.1 and the paragraph after that Corollary and Proof, meaning
\begin{equation}
W(r, k) < r^{n + 1} \leq r^{k^{2}} \Longleftrightarrow k \in [\sqrt{n + 1}, \infty) \Longleftrightarrow n \in [1, k^{2} - 1].
\end{equation}
\subsubsection{Conditions are not Assumptions}
\indent Theorem 3.1, Theorem 5.1 and Corollary 6.1 establish actual \emph{conditions} for which \(W(r, k) < r^{n + 1} \leq r^{k^{2}}\) will be true and the comments after Corollary 3.1 indicate there is a necessary and sufficient condition for which \(W(r, k) < r^{n + 1} \leq r^{k^{2}}\), \emph{not assumptions}, as someone and remarkably so, might claim. With an assumption one assumes something is true then one extrapolates or derives some desired result from that assumption, whereas a condition actually establishes when a statement will be true. If $t$ is a real number and one is told ``\(t^{2} > 0\)" then one can assume correctly that $t$ is not zero to conclude from that assumption that the square of $t$ will not be zero. But if one is told ``Let $t$ be any real number not equal to zero" then that establishes an actual condition for which \(t^{2} > 0\) is true.\\
\indent Therefore we have found \emph{conditions} for which \(W(2, 7) < 2^{49}\) would be true, not assumptions, namely
\begin{equation}
2^{n} \leq W(2, 7) < 2^{n + 1} \leq 2^{49} \Longleftrightarrow 7 \in [\sqrt{n + 1}, \infty) \Longleftrightarrow n \in [1, 48].
\end{equation}
From Theorem 3.1 there is some positive integer exponent $n$, integer \(b_{n} = 1\) and integers \(b_{n - 1}, \ldots, b_{0} \in \{0, 1\}\), for which
\begin{equation}
2^{n} \leq W(2, 7) = 2^{n} + b_{n - 1}2^{n - 1} + \cdots + b_{0} < 2^{n + 1}.
\end{equation}
If the necessary and sufficient condition \(7 \in [\sqrt{ n + 1}, \infty) \Longleftrightarrow n \in [1, 48]\) is violated by van der Waerden number $W(2, 7)$, that is, if \(7 < \sqrt{n + 1}\) actually holds for van der Waerden number $W(2, 7)$, then the statement ``\(2^{n} \leq W(2, 7) < 2^{n + 1} \leq 2^{49}\)" is false, because then the implication is \(7 < \sqrt{n + 1} \Longrightarrow 49 < n + 1\) \(\Longrightarrow 2^{49} < 2^{n + 1} \Longrightarrow\) \(2^{49} \leq W(2, 7) < 2^{n + 1}\) where \(49 \in [1, n] \subset \mathbb{R}\), since for the open interval $(n, n + 1)$ on $\mathbb{R}$, 
\begin{equation}
49 \in (n, n + 1) \subset \mathbb{R}, 
\end{equation}
is impossible, since on the real line, there are no integers, whether the integer is $49$ or not, lying anywhere between $n$ and $n + 1$. \\
\indent At present the value for $W(2, 7)$ still is unknown. Yet we now can use Theorem 3.1, Theorem 5.1 and Corollary 6.1, to use the conditions we have found to determine if \(W(2, 7) < 2^{49}\). For instance with our approach we find that \(\log_{2}W(2, 7) \in [n, n + 1)\), where if for any \(7 \geq \sqrt{n + 1}\) such that \(W(2, 7) < 2^{n + 1} \leq 2^{49}\), the possible values of $n$ for $W(2, 7)$ are \(n \in [1, 48]\) where \(48 = 7^{2} - 1\) (See Corollary 6.1 and the remarks and unproven Corollary immediately afterward). However we can find a finer bound for $n$ than $[1, 48]$. By Rabung and Lotts~\cite{Rabung and Lotts}, we see that 
\begin{equation}
2^{11} < 3703 < W(2, 7). 
\end{equation}
Let \(2^{y} = 3703\) for some positive real value exponent \(y > 11\). We see that if Theorem 5.1 and Corollary 6.1 do apply to $W(2, 7)$, then since \(2^{11} < 2^{y} = 3703 < W(2, 7)\), it would follow that \(11 < y \Longrightarrow\) \(y, n, \log_{2}W(2, 7) \in (11, 48]\). So a necessary and sufficient condition for which  
\begin{equation}
2^{11} < 2^{y} < W(2, 7) < 2^{n + 1} \leq 2^{49},
\end{equation}
will hold for van der Waerden number $W(2, 7)$ when \(k = 7\), is for, when \(k = 7\), we have \(7 \in [\sqrt{n + 1}, \infty) \Leftrightarrow n \in [1, 48]\) to hold (See Corollary 6.1 and the paragraph after its Proof). So the integer exponent $n$ for which $2^{n}$ will divide $W(2, 7)$ while $2^{n + 1}$ does not divide $W(2, 7)$ and for which
\begin{equation}
W(2, 7) = b_{n}2^{n} + b_{n - 1}2^{n - 1} + \cdots + b_{0} < 2^{n + 1},
\end{equation}
where \(b_{n} = 1, b_{n - 1}, \ldots, b_{0} \in \{0, 1\}\), will be
\begin{equation}
n \in [11, 48],
\end{equation}
since \(y = \log_{2}3703 = 11.85447\ldots\) and using the fact that \(\log_{2}2 = 1\). This compares favorably with a result we obtained previously in a Preprint~\cite{Betts} (See Preprint, Corollary 2.1, Section 2), namely the result 
$$
n > \frac{\log W(r, k)}{\log r} - 1,
$$
since from the previous result~\cite{Betts}, we get, when here we take logarithms to the base $2$ and with the use of the lower bound \(3703 < W(2, 7), 2^{11.85447\ldots} = 3703\) by Rabung and Lotts~\cite{Rabung and Lotts},
$$
n > \frac{\log_{2} W(2, 7)}{\log_{2} 2} - 1 > \frac{\log_{2}3703}{\log_{2}2} - 1 = 10.85447\ldots \Longrightarrow n \geq 11,
$$
which compares favorably for a lower bound on $n$ with the result in Eqtn. (46). \\
\indent Thus since \(W(2, 7) > 3703\) and since~\cite{Rabung and Lotts} \(3703 = 2^{11.85447\cdots}\), for the van der Waerden number $W(2, 7)$ to be smaller than $2^{49}$, it must lie somewhere within the integer set $[2^{n}, 2^{n + 1}]$ for some exponent value \(n \in [\lfloor 11.85447\rfloor, 48]\) and for some real positive exponent \(\delta(2, 7) \in (11.85447\ldots, n + 1)\), such that \(2^{n} \leq W(2, 7) = 2^{\delta(2, 7)}\), where \(7 \geq \sqrt{n + 1}\) (See Theorem 3.1). \\
\indent With the application of Theorem 3.1, Theorem 5.1 and Corollary 6.1~\cite{Betts} (See also Table 1, previous preprint) and with the right algorithm in hand, \emph{at least one would know within what integer interval one ought to look!}\\
\subsection{In what Interval $[2^{n}, 2^{n + 1}]$ does $W(2, 7)$ Lie?}
Based upon our approach in this paper and the discussion about $W(2, 7)$ in this Section, one could test one at a time the integer intervals 
\begin{equation}
[1, 2^{11}], \: \: [1, 2^{12}], \: \: [1, 2^{13}], \: \: [1, 2^{14}], \: \: [1, 2^{15}], \ldots, [1, 2^{48}],
\end{equation}
for each or for any \(n \in [11, 48]\), to determine which of these intervals on $\mathbb{R}$ contains $W(2, 7)$, since by Theorem 3.1, one of these intervals must contain $W(2, 7)$, because when one finds the right integer exponent $n$ in $[11, 48]$, the van der Waerden number $W(2, 7)$ will be no smaller than $2^{n}$ and it will be no larger than $2^{n + 1}$ (See Theorem 3.1). If one dares to guess, one can assume that both $\log W(2, 7)$ and the exponent $n$ for which \(W(2, 7) \in [2^{n}, 2^{n + 1})\) is true lie in the interval $[11, 15]$. One can subject any of the intervals in Eqtn. (47) to $2$--colorings to create the binary strings (using $0$, $1$, instead of red, blue) others such as M. Kouril and J. Paul~\cite{Kouril}, have used in Boolean satisfiability tests, which evaluate two or three blocks (i.e., three for $3$--SAT, which is in NP) of binary strings separated by parentheses and by the ``$\land$" operator and arranged in conjunctive normal form, to determine whether this evaluates to true or false for the presence of the APs of length seven. This would necessitate the use either of some good Beowulf clusters~\cite{Kouril}, or high performance computing or a computer grid, technology to which at the moment the Author has no access. Yet back in the early 2000s the Author was a volunteer in the Gimps Project to find the largest Mersenne prime among Mersenne numbers with millions of digits (www.mersenne.org). The Project used a Lucas-Lehmer algorithm along with a Fast Fourier Transform (FFT) algorithm to find the unknown Mersenne prime. Since the Mersenne numbers contained millions of digits and even hundreds of millions of digits, it involved considerable computational bit complexity. But with the advances in microprocessor computer power over the years along with the paradigm known as Moore's Law, it seems dubious that to find the right exponent $n$ such that \(2^{n} \leq W(2, 7) < 2^{n + 1}\), from somewhere in the interval $[11, 48]$ would be more demanding in terms of computation time.\\
\subsection{Van der Waerden Number $W(2, 10)$}
B. van der Waerden derived a result to show that $W(2, 10)$ had an upper bound greater than $10^{9}$~\cite{Graham and Spencer},~\cite{van der Waerden},~\cite{Graham and Rothschild}. Rabung and Lotts~\cite{Rabung and Lotts}, through the use of ``cyclic zippers" have derived a lower bound \(103474 < W(2, 10)\). Let $n$ be as defined in the previous Sections, such that
$$
W(2, 10) < 2^{n + 1}.
$$  
We get \(n + 1 > \delta(2, 10) > 16.66004\cdots \Longrightarrow n > 15.66004\cdots\), where \(16.66004\cdots = \log_{2}103474\). Suppose Theorem 5.1 and Corollary 6.1 hold for $W(2, 10)$. Then \(10 \geq \sqrt{n + 1} \Leftrightarrow n \in [1, 99] \Longrightarrow\) \(W(2, 10) < 2^{n + 1} \leq 2^{100}\) holds for $W(2, 10)$, and so for $W(2, 10)$ we get the bound 
$$
2^{16.66004\cdots} < W(2, 10) < 2^{n + 1} \leq 1267650600228229401496703205376,
$$
where \(2^{100} = 1267650600228229401496703205376\). That is,
$$
W(2, 10) \in (2^{16.66004\cdots}, 2^{100}).
$$
\indent Corollary 6.1 enables us to compare the sizes of $n$ and $r$~\cite{Betts} (See Tables), as we show with the following Corollary.
\begin{corollary}
Suppose Corollary 6.1 and Eqtn. (37) hold. Let 
\begin{equation}
W(r, k) = b_{n}r^{n} + b_{n - 1}r^{n - 1} + \cdots + b_{0} < r^{n + 1} \leq r^{k^{2}},
\end{equation}
for all \(k \geq \sqrt{n + 1}\). Suppose \(k \geq r\). Then \(n \geq r\) is possible. On the other hand suppose that \(k < r < k^{2}, k = n\). Then \(n < r < n^{2} = k^{2}\).
\end{corollary}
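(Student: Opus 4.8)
The plan is to treat the corollary's two assertions separately, since they are logically independent: the first is an existence (``possibility'') claim, whereas the second is a deterministic consequence of the substitution $k=n$.

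For the first assertion I would invoke Corollary 6.1 (equivalently Eqtn. (37)), which under the standing hypotheses gives $n \in [1,\, k^{2} - 1]$. The point of the phrase ``$n \geq r$ is possible'' is not that every admissible $n$ satisfies $n \geq r$, but rather that the hypothesis $k \geq r$ does not force $n < r$; that is, the integer interval $[r,\, k^{2} - 1]$ from which $n$ may be drawn is nonempty. So the key step is to verify $k^{2} - 1 \geq r$. Using $k \geq r$ one has $k^{2} - 1 \geq r^{2} - 1$, and since the paper works throughout with $r > 1$, the elementary inequality $r^{2} - 1 \geq r$ (equivalently $r^{2} - r - 1 \geq 0$, which holds for every integer $r \geq 2$) gives $k^{2} - 1 \geq r$. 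Hence $[r,\, k^{2} - 1]$ contains an integer, and $n$ can legitimately be chosen with $n \geq r$ while still respecting $n \leq k^{2} - 1$; this is exactly the claim that $n \geq r$ is possible.

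For the second assertion the argument is a direct substitution. Assuming $k < r < k^{2}$ together with $k = n$, I would replace $k$ by $n$ throughout: the left inequality $k < r$ becomes $n < r$, the right inequality $r < k^{2}$ becomes $r < n^{2}$, and the identity $k^{2} = n^{2}$ is immediate from $k = n$. Stringing these together yields $n < r < n^{2} = k^{2}$, which is the desired conclusion. No appeal beyond the given hypotheses is required here.

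The only genuinely delicate point is the interpretation of ``possible'' in the first part: one must read it as a statement about the nonemptiness of the admissible range for $n$ (a consistency statement) rather than as a universal statement $n \geq r$, which would be false in general since Corollary 6.1 also permits small values such as $n = 1$. Once that reading is fixed, the verification $r^{2} - 1 \geq r$ for $r \geq 2$ closes the argument, and the remaining case $r = 1$ is excluded because the paper assumes $r > 1$ throughout.
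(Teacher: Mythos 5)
Your proposal is correct and follows essentially the same route as the paper: the first claim is read as a consistency statement derived from Corollary 6.1's range $n \in [1, k^{2} - 1]$ together with $k \geq r$, and the second is the same direct substitution of $k = n$ into $k < r < k^{2}$. Your explicit check that $k^{2} - 1 \geq r$ (via $r \geq 2$) is in fact slightly more careful than the paper's informal remark that $[1, k^{2}-1]$ is ``a much larger set'' than $[1, k]$, but it is the same idea.
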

\begin{proof}
We see from Corollary 6.1, that \(n \in [1, k^{2} - 1]\) while \(r \in [1, k] \subset [1, k^{2} - 1]\). Since $[1, k^{2} - 1]$ is a much larger set than is $[1, k]$ and since we are given by Corollary 6.1 that $n$ is contained in $[1, k^{2} - 1]$ we conclude that \(n \geq r\) is possible.\\
\indent Now instead note that if \(k < r < k^{2}, k = n\) is true then actually \(W(r, k) = W(r, n)\), \(W(r, n) < r^{n + 1} \leq r^{n^{2}} = r^{k^{2}}\) and in addition~\cite{Betts} (See entry for $W(4, 3)$ in Table 1), \(k = n \Longrightarrow k^{2} = n^{2}\) \(\Longrightarrow k < r < k^{2}\) \(\Longrightarrow n < r < n^{2}\).
\end{proof}
\subsection{The Approximation of $W(r, k)$ by $r^{n}$, when these two Integers are very Large}
One can approximate $W(r, k)$ by $r^{n}$ with small error when these two integers are very large, as we show now.
\begin{theorem}
When both $W(r, k)$ and $r^{n}$ are very large, any approximation \(r^{n} \approx W(r, k)\) has a relative error of $|1 - O(1)|$.
\end{theorem}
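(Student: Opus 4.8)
The plan is to reduce the claim entirely to the containment $W(r,k)\in[r^{n},r^{n+1})$ that was already established in Theorem 3.1 and Theorem 5.1, and then to unwind the definition of relative error. First I would fix the meaning of the relative error of the approximation $r^{n}\approx W(r,k)$ as
$$
\varepsilon \;=\; \frac{\lvert W(r,k)-r^{n}\rvert}{W(r,k)}.
$$
Since Theorem 5.1 supplies the lower bound $r^{n}\le W(r,k)$, the quantity inside the absolute value is nonnegative, so the error simplifies to $\varepsilon = 1 - \dfrac{r^{n}}{W(r,k)}$, with no absolute value yet needed.

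Next I would bound the ratio $r^{n}/W(r,k)$ directly from the interval containment. Dividing the chain $r^{n}\le W(r,k)<r^{n+1}$ through by the positive integer $W(r,k)$ gives
$$
\frac{1}{r} \;<\; \frac{r^{n}}{W(r,k)} \;\le\; 1 ,
$$
so that $\varepsilon \in \left[\,0,\; 1-\tfrac{1}{r}\,\right)$. The decisive point is that these two-sided bounds are uniform: however large the integers $W(r,k)$ and $r^{n}$ grow (that is, as $n\to\infty$ with $r$ held fixed), the ratio $r^{n}/W(r,k)$ never escapes the fixed interval $(1/r,\,1]$. Hence $r^{n}/W(r,k)=O(1)$, and substituting into the expression for $\varepsilon$ yields $\varepsilon = 1 - O(1) = \lvert 1 - O(1)\rvert$, the final equality being legitimate because $\varepsilon\ge 0$. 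The identical conclusion follows if one instead normalizes by the approximant $r^{n}$, since then $\varepsilon' = \lvert W(r,k)/r^{n}-1\rvert$ where $W(r,k)/r^{n}\in[1,r)=O(1)$ by the same division of the chain.

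The step I expect to need the most care is the justification that the ratio is genuinely $O(1)$ in the asymptotic sense, rather than merely bounded for one fixed value of $n$. This is precisely the content of the hypothesis ``when both $W(r,k)$ and $r^{n}$ are very large'': we are passing to a limit, and what licenses the $O(1)$ notation is that the bound $1/r<r^{n}/W(r,k)\le 1$ carries no dependence on $n$ whatsoever, being inherited wholesale from the exponent structure $[r^{n},r^{n+1})$. Once this uniformity is made explicit, the theorem is immediate: the relative error neither collapses to $0$ nor diverges, but remains confined to $[0,\,1-1/r)$, so it is exactly of the form $\lvert 1 - O(1)\rvert$. Everything beyond the interval containment is a single substitution.
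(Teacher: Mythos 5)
Your proof is correct and its skeleton matches the paper's: both write the relative error as $\left|1 - \frac{r^{n}}{W(r,k)}\right|$ and reduce the theorem to showing that the ratio $\frac{r^{n}}{W(r,k)}$ is $O(1)$. Where you differ is in how that $O(1)$ estimate is justified. The paper works with the digit expansion $W(r,k)=b_{n}r^{n}+b_{n-1}r^{n-1}+\cdots+b_{0}$, factors $b_{n}r^{n}$ out of the denominator, and argues that the bracketed series $1+\frac{b_{n-1}}{b_{n}r}+\cdots+\frac{b_{0}}{b_{n}r^{n}}$ is $1+O(1)$, so the ratio equals $\frac{1}{b_{n}(1+O(1))}=O(1)$. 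You bypass the digits entirely and divide the containment $r^{n}\le W(r,k)<r^{n+1}$ through by $W(r,k)$ to get the explicit two-sided bound $\frac{1}{r}<\frac{r^{n}}{W(r,k)}\le 1$, which is both more economical and more informative: it exhibits a uniform bound independent of $n$ (indeed the upper bound $1$ is independent of $r$ as well), whereas the paper's $1+O(1)$ step leaves the constant implicit. Your version also correctly notes that the error is nonnegative, so the absolute value is cosmetic. The one thing the paper's route buys that yours does not is the leading-order identification of the ratio with $\frac{1}{b_{n}}$, but that extra precision is not used in the statement being proved, so nothing is lost.
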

\begin{proof}
\begin{eqnarray}
r^{n}&\leq&W(r, k)\\
     &=&b_{n}r^{n} + b_{n - 1}r^{n - 1} + \cdots + b_{0}\nonumber\\
     &\Longrightarrow&\left|\frac{b_{n}r^{n} + b_{n - 1}r^{n - 1} + \cdots + b_{0} - r^{n}}{b_{n}r^{n} + b_{n - 1}r^{n - 1} + \cdots + b_{0}}\right|\nonumber\\
     &=&\left|1 - \frac{r^{n}}{b_{n}r^{n} + b_{n - 1}r^{n - 1} + \cdots + b_{0}}\right| = |1 - O(1)|,
\end{eqnarray}
since one easily can demonstrate, using the fact that \(r > \max(\{b_{n}, b_{n - 1}, \cdots, b_{0}\})\) in Eqtn. (44),
\begin{eqnarray}
\frac{r^{n}}{b_{n}r^{n} + b_{n - 1}r^{n - 1} + \cdots + b_{0}}&=&\frac{1}{b_{n}\left(1 + \frac{b_{n - 1}}{b_{n}} \cdot \frac{1}{r} + \frac{b_{n - 2}}{b_{n}} \cdot \frac{1}{r^{2}} + \cdots + \frac{b_{0}}{b_{n}} \cdot \frac{1}{r^{n}}\right)}\nonumber\\
                                                              &=&\frac{1}{b_{n}(1 + O(1))} = O(1).
\end{eqnarray}
\end{proof}
In a similar manner one can show that, for large $k^{m}$, $W(r, k)$ and with
$$
W(r, k) = c_{m}k^{m} + c_{m - 1}k^{m - 1} + \cdots + c_{0},
$$
one has \(W(r, k) \approx k^{m}\), also with a relative error of \(|1 - O(1)|\).
\begin{theorem}
Let \(W(r, k) = r^{\delta(r,k)}\) where \(\delta(r,k) = \log_{r}W(r, k) \in [n, n + 1)\). Then for large $W(r, k)$, $r^{n}$, the relative error in any approximation \(r^{n} \approx W(r, k)\) is $o(1)$.
\end{theorem}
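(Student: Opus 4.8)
The plan is to reuse the algebraic skeleton of the preceding Theorem (the one bounding the same relative error by $|1-O(1)|$) but to sharpen its coarse estimate of the ratio $r^{n}/W(r,k)$ into one of the form $1-o(1)$, working throughout with the exponential representation $W(r,k)=r^{\delta(r,k)}$, $\delta(r,k)\in[n,n+1)$. First I would rewrite the relative error in closed form. Since $W(r,k)\ge r^{n}$ forces $r^{\,n-\delta(r,k)}\le 1$, one has
\[
E \;=\; \left|\,1-\frac{r^{n}}{W(r,k)}\,\right|
    \;=\; \bigl|\,1-r^{\,n-\delta(r,k)}\,\bigr|
    \;=\; 1-r^{-(\delta(r,k)-n)},
\]
so that the entire error is governed by the single quantity $\varepsilon:=\delta(r,k)-n$, the fractional part of $\log_{r}W(r,k)$, which by hypothesis lies in $[0,1)$.

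The second step is to linearise in $\varepsilon$. Expanding $r^{-\varepsilon}=\exp(-\varepsilon\ln r)=1-\varepsilon\ln r+O(\varepsilon^{2})$ yields $E=\varepsilon\ln r+O(\varepsilon^{2})=O(\varepsilon)$, so the theorem reduces to the single assertion that $\varepsilon=o(1)$ in the regime where $W(r,k)$ and $r^{n}$ are both large. Using the digit expansion of Theorem 3.1, this is in turn the statement that the normalised value
\[
\frac{W(r,k)}{r^{n}} \;=\; b_{n}+\frac{b_{n-1}}{r}+\frac{b_{n-2}}{r^{2}}+\cdots+\frac{b_{0}}{r^{n}}
\]
tends to $1$; equivalently, that the leading digit satisfies $b_{n}=1$ while the tail $\sum_{i\ge 1}b_{n-i}\,r^{-i}$ tends to $0$.

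I expect this last reduction to be the genuine obstacle. Largeness of $W(r,k)$ by itself does not force $\varepsilon\to 0$: the lower-order part $W(r,k)-r^{n}=b_{n-1}r^{n-1}+\cdots+b_{0}$ may be as large as $r^{n}-1$, which is of the same order as $r^{n}$, and in that case $E$ is only bounded by the constant $1-r^{-1}$ — precisely the $O(1)$ estimate recorded above. To pass from $O(1)$ to $o(1)$ one therefore needs a structural input beyond mere size, namely that the base-$r$ digits of $W(r,k)$ below the leading position are negligible in aggregate, i.e.\ $W(r,k)-r^{n}=o(r^{n})$, equivalently $\delta(r,k)-n\to 0$. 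The plan, then, is to isolate that property as the operative hypothesis — for instance by restricting to the subfamily of van der Waerden numbers whose normalised quotient $W(r,k)/r^{n}$ is verified to approach $1$, a behaviour already suggested by the tabulated cases of Table A, where this quotient stays close to $1$ — and then to feed $\varepsilon=o(1)$ into the linearised estimate $E=\varepsilon\ln r+O(\varepsilon^{2})$ to obtain $E=o(1)$.
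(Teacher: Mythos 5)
Your computation of the relative error agrees exactly with the paper's: both reduce $\left|1 - r^{n}/W(r,k)\right|$ to $1 - r^{-(\delta(r,k)-n)}$, so that everything hinges on the fractional exponent $\varepsilon = \delta(r,k) - n \in [0,1)$. Where you part ways is at the decisive step. The paper simply writes $\varepsilon = O(1)$ and then concludes $\left|1 - r^{-O(1)}\right| = o(1)$; that inference is not valid, since a bounded exponent only gives $1 - r^{-\varepsilon} \leq 1 - 1/r$, a quantity bounded away from $0$ whenever $\varepsilon$ stays away from $0$. You identify precisely this as ``the genuine obstacle,'' and you are right: largeness of $W(r,k)$ and $r^{n}$ does not by itself force $\varepsilon \to 0$. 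The known values already illustrate the difficulty --- for $W(2,5) = 178$ one has $r^{n} = 2^{7} = 128$ and a relative error of $50/178 \approx 0.28$ --- and nothing in the hypotheses prevents the lower-order digits from contributing a constant fraction of $r^{n}$ indefinitely.

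So your proposal does not prove the theorem as stated, but neither does the paper: the statement needs the additional hypothesis you isolate, namely $W(r,k)/r^{n} \to 1$ (equivalently $\delta(r,k) - n = o(1)$), and once that is granted your linearisation $E = \varepsilon \ln r + O(\varepsilon^{2})$ finishes the argument cleanly. Your diagnosis --- that passing from the previous theorem's $|1 - O(1)|$ bound to $o(1)$ requires structural input about the digits below the leading position, not mere size --- is exactly the content the paper's own proof is missing.
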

\begin{proof}
Recalling that \(n \leq \delta(r, k) < n + 1\),
\begin{eqnarray}
\left|\frac{r^{\delta(r, k)} - r^{n}}{r^{\delta(r, k)}}\right|&=&\left|1 - r^{n - \delta(r, k)}\right|\\
                                                              &=&\left|1 - \frac{1}{r^{\delta(r,k) - n}}\right|\nonumber\\
                                                              &=&\left|1 - \frac{1}{r^{O(1)}}\right|\nonumber\\
                                                              &=&o(1).
\end{eqnarray}
\end{proof}
Let \(\delta(r, k) = \log_{k}W(r, k) \Longrightarrow W(r, k) = k^{\delta(r, k)}\) and let $W(r, k)$, $k^{m}$, be very large. One also can show that \(W(r, k) \approx k^{m}\) with a relative error of $o(1)$.
\section{What happens when $r$ is as large as a Tower of Exponents, and $k$ is fixed?}
With our results, what happens to the size of the exponent $n$, should $r$ have any large value, while $k$ is a fixed constant very much smaller than $r$? Our results still hold, as $n$ remains bounded above always by $\frac{\log W(r, k)}{\log r}$ even for any $r$ extremely large. Here we prove that in this case Theorem 3.1 or Theorem 5.1 and Corollary 6.1 still hold, since this case forces $n$ to remain bounded above.
\begin{theorem}
Let $c$ be any fixed integer greater than one, and let $X$ be any very large integer such that \(r = c^{X}\) also is very large, such as a tower of $c$'s. Further let \(k = k_{0}\) be fixed such that \(c^{X} \gg k_{0}\). Then if for any such $c$ and for any large integer $X$,
\begin{equation}
n = O\left(\frac{\log W(c^{X}, k_{0})}{X\log c}\right),
\end{equation}
then \(W(c^{X}, k_{0}) < (c^{X})^{n + 1} \leq (c^{X})^{k_{0}^{2}}\) remains true for any \(k_{0} \geq \sqrt{n + 1}\).
\end{theorem}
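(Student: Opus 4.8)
The plan is to recognize that this statement is Theorem 3.1 (equivalently Theorem 5.1) transported to the special value $r = c^X$, and that the only real content is bookkeeping on the logarithm. First I would substitute $r = c^X$ throughout and record the single identity $\log r = \log(c^X) = X\log c$. With this, the hypothesis rewrites as
\[
n = O\!\left(\frac{\log W(c^X, k_0)}{X\log c}\right)
= O\!\left(\frac{\log W(r, k_0)}{\log r}\right)
= O\!\left(\log_r W(r, k_0)\right),
\]
so the quantity governing $n$ is exactly $\delta(r, k_0) = \log_r W(r, k_0)$ from Section 2, now written with the large base $r = c^X$.

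Next I would invoke the Definition of $n$ from Section 2: $n$ is the least exponent with $r^n \mid W(r, k_0)$ and $r^{n+1} \nmid W(r, k_0)$, equivalently $n = \lfloor \log_r W(r, k_0)\rfloor$. Hence $n \leq \log_r W(r, k_0) = \frac{\log W(c^X, k_0)}{X\log c}$ holds identically, with no appeal whatsoever to the size of $X$. This is the crux of the ``$r$ may be a tower'' claim: because $\log r = X\log c$ sits in the denominator, the growth of $W(c^X, k_0)$ in the numerator is damped, so $n$ stays bounded above by $\delta(r, k_0)$ however large $X$ (and hence $r$) is made. In particular the big-$O$ hypothesis is automatically met and imposes no new restriction; once $k_0 \geq \sqrt{n+1}$ is assumed, Corollary 6.1 moreover confines $n$ to $[1, k_0^2 - 1]$.

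Finally I would close by applying Theorem 3.1 verbatim with $r = c^X$ and $k = k_0$. Since $W(r, k_0) \in [r^n, r^{n+1})$ we have $W(c^X, k_0) < (c^X)^{n+1}$; and the standing hypothesis $k_0 \geq \sqrt{n+1}$ is equivalent to $n + 1 \leq k_0^2$, which upon exponentiating with base $c^X > 1$ gives $(c^X)^{n+1} \leq (c^X)^{k_0^2}$. Chaining these yields
\[
W(c^X, k_0) < (c^X)^{n+1} \leq (c^X)^{k_0^2},
\]
as required.

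The main obstacle is not any calculation but making the boundedness of $n$ genuinely rigorous as $r \to \infty$: one must be sure that $\delta(r, k_0) = \log_r W(r, k_0)$ stays finite for fixed $k_0$ even though $W(r, k_0)$ itself grows with $r$. The clean resolution is precisely the identity $\log_r W(r, k_0) = \frac{\log W(c^X, k_0)}{X\log c}$ together with $n = \lfloor \log_r W(r, k_0)\rfloor$, so that $n$ is pinned by a ratio whose denominator grows with $X$. No separate growth estimate on $W(c^X, k_0)$ is needed beyond the defining membership $W(r, k_0) \in [r^n, r^{n+1})$, and the conclusion then reduces entirely to Theorem 3.1.
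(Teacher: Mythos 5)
Your proposal is correct and follows essentially the same route as the paper: both arguments take logarithms of \(r^{n} \leq W(r, k) < r^{n + 1}\), use the identity \(\log r = X\log c\) to obtain \(n \leq \frac{\log W(c^{X}, k_{0})}{X\log c}\) (so the big-\(O\) hypothesis is automatic), and then conclude via Theorem 3.1 that \(W(c^{X}, k_{0}) < (c^{X})^{n + 1} \leq (c^{X})^{k_{0}^{2}}\) when \(k_{0} \geq \sqrt{n + 1}\). Your write-up is in fact slightly more explicit than the paper's at the final step, where you spell out that \(k_{0} \geq \sqrt{n + 1}\) is equivalent to \(n + 1 \leq k_{0}^{2}\) before exponentiating.
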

\begin{proof}
\begin{eqnarray}
r^{n}&\leq           &W(r, k) < r^{n + 1}\\
     &\Longrightarrow&n\log r \leq \log W(r, k) < (n + 1)\log r\nonumber\\
     &\Longrightarrow&n\leq \frac{\log W(r, k)}{\log r} \leq \log W(r, k) < n + 1.
\end{eqnarray}
Now letting \(r = c^{X}, k = k_{0}\) in Eqtn. (49) we see that 
\begin{equation}
n \leq \frac{\log W(c^{X}, k_{0})}{X\log c} \Longrightarrow n = O\left(\frac{\log W(c^{X}, k_{0})}{X\log c}\right).
\end{equation}
Thus even if $c^{X}$ is very large, such as a large tower of $c$'s, we still have from Eqtn. (49) that \(n = O\left(\frac{\log W(c^{X}, k_{0})}{X\log c}\right)\) $\Longrightarrow$ \(W(c^{X}, k_{0}) < (c^{X})^{n + 1} \leq (c^{X})^{k_{0}^{2}}\) for any \(k_{0} \geq \sqrt{n + 1}\).
\end{proof}
\subsection{What happens to the Size of the Exponent $n$ when \(r^{n} \gg k\)?}
Here we prove that, with the conditions established by Theorem 3.1, Theorem 5.1 and Corollary 6.1, the integer exponent $n$ remains bounded always.
\begin{theorem}
Let \(r^{n} \gg k\) and suppose the conditions in Theorem 3.1, Theorem 5.1 or Corollary 6.1 hold for van der Waerden number $W(r, k)$. Further, let \(\frac{k}{r^{n}} = o(1)\). Then the positive integer exponent $n$ for which the inequality
\begin{equation}
r^{n} \leq W(r, k) = b_{n}r^{n} + b_{n - 1}r^{n - 1} + \cdots + b_{0} < r^{n + 1} \leq r^{k^{2}},
\end{equation}
holds for all \(k \geq \sqrt{n + 1}\), is bounded below always by $\frac{\log k}{\log r} - 1$ and it is bounded above always by $k^{2} - 1$.  
\end{theorem}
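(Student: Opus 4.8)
The plan is to handle the two bounds on $n$ separately, since each follows from a different portion of the hypothesis, and neither requires new machinery beyond the inequality chain $r^{n} \leq W(r, k) < r^{n + 1} \leq r^{k^{2}}$ already guaranteed by Theorem 3.1, Theorem 5.1 and Corollary 6.1. The whole argument is a logarithmic rearrangement of inequalities established earlier, so I would keep it short.

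For the upper bound $n \leq k^{2} - 1$, I would simply invoke Corollary 6.1. The hypothesis assumes the conditions of Theorem 3.1, Theorem 5.1 or Corollary 6.1 hold for $W(r, k)$, and each of these forces $k \geq \sqrt{n + 1}$. Squaring gives $k^{2} \geq n + 1$, i.e. $n \leq k^{2} - 1$, exactly the claimed upper bound. Since this is precisely the content of Corollary 6.1, no further work is needed for this half.

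For the lower bound $n \geq \frac{\log k}{\log r} - 1$, I would bring $k$ into the inequality chain. The asymptotic hypothesis $\frac{k}{r^{n}} = o(1)$, equivalently $r^{n} \gg k$, forces $k < r^{n}$ in the regime where these quantities grow large, and hence $k < r^{n} \leq W(r, k) < r^{n + 1}$. Taking logarithms of the outer inequality $k < r^{n + 1}$ gives $\log k < (n + 1)\log r$, so $\frac{\log k}{\log r} < n + 1$, which rearranges to $n > \frac{\log k}{\log r} - 1$. This is exactly the asserted lower bound (in fact with strict inequality, and one could even sharpen it to $n > \frac{\log k}{\log r}$ by using $k < r^{n}$ directly rather than $k < r^{n + 1}$).

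The main point requiring care is not a genuine obstacle but a place to be precise: the translation of the asymptotic hypothesis $\frac{k}{r^{n}} = o(1)$ into the strict inequality $k < r^{n}$, and thence $k < r^{n + 1}$. One must state that $o(1)$ is interpreted in the limit where $r^{n}$ and $W(r, k)$ grow large, so that eventually $k < r^{n}$, and one must keep the logarithm base fixed throughout — the displayed bound $\frac{\log k}{\log r}$ is scale invariant under change of base, so any fixed base is admissible. Beyond that, everything is a single-line manipulation of the chain $r^{n} \leq W(r, k) < r^{n + 1} \leq r^{k^{2}}$ supplied by the earlier results.
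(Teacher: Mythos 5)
Your proposal is correct and follows essentially the same route as the paper: both arguments take logarithms of the chain $k < r^{n} \leq W(r,k) < r^{n+1} \leq r^{k^{2}}$, divide by $\log r$ to get $\frac{\log k}{\log r} - 1 < n$, and obtain the upper bound $n \leq k^{2} - 1$ from the condition $k \geq \sqrt{n+1}$ of Corollary 6.1. Your observation that the lower bound could be sharpened to $n > \frac{\log k}{\log r}$ by using $k < r^{n}$ directly is a fair one, and in fact the paper's own first displayed line already contains that stronger inequality before it is weakened in the final step.
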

\begin{proof}
\begin{eqnarray}
     k&\ll            &r^{n}\\
      &\Longrightarrow&\log k \ll n\log r \leq \log W(r, k) < (n + 1)\log r \leq k^{2}\log r\nonumber\\
      &\Longrightarrow&\log k \ll (n + 1)\log r \leq k^{2}\log r\\
      &\Longrightarrow&\frac{\log k}{\log r} - 1 \ll n \leq k^{2} - 1\nonumber\\
      &\Longrightarrow&n \in \left(\frac{\log k}{\log r} - 1, k^{2} - 1\right],
\end{eqnarray}
\end{proof}
\subsection{Further possible Areas of Exploration}
The possible values for $k$, $r$, $n$, $n + 1$, $\sqrt{n + 1}$, $\log r$, $\log W(r, k)$ and for
\begin{equation}
\delta(r, k) = \frac{\log W(r, k)}{\log r} \in [n, n + 1),
\end{equation}
all grow much more slowly on $\mathbb{R}$ than does $W(r, k)$~\cite{Betts} (See Tables in the Preprint). In fact all the possible values for $n$ for van der Waerden numbers $W(2, 3)$, $W(2, 4)$, $W(2, 5)$, $W(2, 6)$, $W(3, 3)$, $W(3, 4)$ and $W(4, 3)$, lie within the very same interval $[3, 10]$ and 
\begin{equation}
\{\delta(2, 3), \delta(2, 4), \delta(2, 5), \delta(2, 6), \delta(3, 3), \delta(3, 4), \delta(4, 3)\} \subset [3, 11),
\end{equation}
although van der Waerden numbers \(W(2, 3) = 9\) and \(W(2, 6) = 1132\) differ by no less than $10^{3}$ where \(W(2, 3), W(2, 6) \in [1, 2^{11})\). Therefore if for any given $k$, $r$, $\log r$, one is able one day to estimate the bounds, sizes or values for the numbers $n$, $n + 1$, $\log W(r, k)$, \(\sqrt{n + 1} \leq k\), \(\delta(r, k) \in [n, n + 1)\) in some way when some unknown $W(r, k)$ is bounded below as has been done already by J. Rabung and M. Lotts~\cite{Rabung and Lotts}, one might reduce actually the computational complexity of determining or estimating by computer the possible values for any $W(r, k)$ that has a value currently unknown. The problem would be reduced to estimating instead the possible bounds on $\log_{r} W(r, k)$ or on $\log_{e} W(r, k)$, instead of searching for bounds on $W(r, k)$ and possibly using Boolean satisfiability arguments to do so.  

\pagebreak

\end{document}